\def \F {{\mathbb F}}
\def \Z {{\mathbb Z}}
\def\F{{\mathbb F}}
\def\Z{{\mathbb Z}}
\def\F{{\mathbb F}}
\def\R{{\mathbb R}}
\def\Z{{\mathbb Z}}
\def\R{{\mathbb R}}
\def\C{{\mathbb C}}
\def\00{{\bf 0}}
\def\11{{\bf 1}}
\def\+{\oplus}
\def \F {{\mathbb F}}
\def \Z {{\mathbb Z}}
\def\wt{{\rm wt}}
\def\dist{{\rm dist}}
\begin{document}

\title{\bf  A quantum algorithm to estimate the Gowers $U_2$ norm and
linearity testing of Boolean functions}

\author{ C.~A.~Jothishwaran\inst{1}, Anton Tkachenko\inst{2}, Sugata~Gangopadhyay\inst{1}  \\
Constanza Riera\inst{2}, Pantelimon St\u anic\u a\inst{3}}
\institute{Department of Computer Science and Engineering,\\
Indian Institute of Technology Roorkee, Roorkee 247667, INDIA\\ \email{jothi@ph.iitr.ac.in, sugata.gangopadhyay@cs.iitr.ac.in}\\
\and  Department of Computer Science, Electrical Engineering and Mathematical
Sciences, \\
Western Norway University of Applied Sciences, 5020 Bergen, NORWAY\\ \email{Anton.Tkachenko@hvl.no, csr@hvl.no}
\and Department of Applied Mathematics  \\
Naval Postgraduate School, Monterey, CA 93943--5216, USA \\ \email{pstanica@nps.edu}}

\maketitle

\abstract{We propose a quantum algorithm to estimate the Gowers $U_2$ norm of a Boolean function, and extend it into a  second algorithm  to distinguish between linear Boolean functions and Boolean functions that are $\epsilon$-far from the set of linear Boolean functions, which seems to perform better than the classical BLR algorithm. 
Finally, we outline an algorithm to estimate
Gowers $U_3$ norms of Boolean functions. }

\noindent {\bf Keywords:} {Boolean functions, Fourier spectrum, 
Gowers uniformity norms, quantum algorithms}

\section{Introduction}

Gowers uniformity norms were introduced by Gowers \cite{Gowers2001} to prove Szmer\'edi's theorem. In their full generality, Gowers uniformity norms operate over functions from finite sets to the field of complex numbers. The Gowers uniformity norm of dimension $d$ of a function $f$ tells us the extent of correlation of $f$ to the polynomial phase functions of degree up to $d-1$. In this paper, we consider the Gowers uniformity norm $U_2$ of dimension $2$  
for Boolean functions, and find a quantum estimate of its upper bound. We also propose a 
linearity test of Boolean functions based on the same quantum algorithm, which seems to perform better than the classical BLR algorithm. 

\subsection{Boolean functions}

We denote the ring of integers, the set of positive integers, and the fields of real numbers and complex numbers by $\Z$, $\Z^+$, $\R$, and $\C$, respectively. For any $n \in \Z^+$, the set $[n] = \{ i \in \Z^+: 1 \leq i \leq n \}$, and $\F_2^n = \{ x = (x_1, \ldots, x_n) : x_i \in \F_2, \mbox{ for all } i \in [n]\}$ where $\F_2$ is the prime field of characteristic $2$. Addition in each of the above algebraic systems is denoted by `$+$'. An $n$-variable Boolean function $F$ is a function from $\F_2^n$ to $\F_2$. The set of all such functions is 
denoted by  $\mathfrak B_n$. Each function $F \in \mathfrak B_n$ has its character form $f : \F_2^n \rightarrow \R$ defined by $f(x) = (-1)^{F(x)}$, for all $x \in \F_2^n$. In this article, abusing notation, we refer to the character form $f$ as Boolean functions and go to the extent of writing $f \in \mathfrak B_n$, whenever $F \in \mathfrak B_n$, if there is no danger of confusion. For any $x, y \in \F_2^n$, the inner product $x \cdot y = \sum_{i \in [n]} x_i y_i$ where the sum is over $\F_2$. The (Hamming) weight of a vector $u=(u_1,\ldots,u_n) \in \F_2^n$ is $\wt (u) = \sum_{i \in [n]}u_i$, where the sum is over $\Z$. The weight of a Boolean function $F \in \mathfrak B_n$, or equivalently $f \in \mathfrak B_n$ is the cardinality
$\wt (F) = \abs{ \{x \in \F_2^n: F(x) \neq 0 \}}$, or equivalently
$\wt (f) = \abs{ \{x \in \F_2^n: f(x) \neq 1 \}}$. The Hamming distance
between $F, G \in \mathfrak B_n$, or equivalently, between $f, g \in \mathfrak B_n$ is $d_H(F, G) = \abs{\{ x \in \F_2^n : F(x) \neq G(x)\}}$, or, $d_H(f, g) = \abs{\{ x \in \F_2^n : f(x) \neq g(x)\}}$. Any  Boolean function $F \in \mathfrak B_n$ can be expressed as a polynomial, called the algebraic normal form (ANF),
\begin{equation}
\label{anf-def-eq}
F(x_1, \ldots, x_n) = \sum_{u \in \F_2^n} \lambda _u x^u
\mbox{ where } \lambda _u \in \F_2, \mbox{ and } x^u = \prod_{i \in [n]}x_i^{u_i}.
\end{equation}
The algebraic degree of a Boolean function $\deg(F) = \max \{ \wt(u) : \lambda_u \neq 0\}$. A Boolean function with algebraic degree at most $1$ is said to be an affine function. An affine function in $\mathfrak B_n$ is of the form $\varphi (x) = u \cdot x + \varepsilon$ for some $u \in \F_2^n$ and $\varepsilon \in \F_2$. An affine function with $\varepsilon = 0$ is said to be a linear function. We denote the set of all $n$-variable affine functions by $\mathfrak A_n$, and the set of all $n$-variable linear functions by $\mathfrak L_n$.

The Fourier series expansion of $f \in \mathfrak B_n$ is
\begin{equation}
\label{fourier-exp}
f(x) = \sum_{ u \in \F_2^n} \widehat f (u) (-1)^{u \cdot x}.
\end{equation}
The coefficients $\widehat f (u)$ are said to be the Fourier coefficients of $f$.
The transformation $f \mapsto \widehat f$ is the Fourier transformation of $f$.
It is known that
\begin{equation}
\label{ortho-linear}
\sum_{x \in F_2^n} (-1)^{v \cdot x} = \begin{cases} 0 & \mbox{ if } v \neq 0 \\
2^n & \mbox { if } v =0.
\end{cases}
\end{equation}
Equations~\eqref{fourier-exp} and \eqref{ortho-linear} yield
\begin{equation}
\label{inv-fourier}
\begin{split}
\sum_{x \in \F_2^n} f(x)(-1)^{u\cdot x} &=
\sum_{ x \in \F_2^n} \sum_{ v \in \F_2^n} \widehat f (v) (-1)^{(u+v) \cdot x}\\
& = \sum_{ v \in \F_2^n} \widehat f (v) \sum_{ x \in \F_2^n} (-1)^{(u+v) \cdot x}
= 2^n \widehat f(u),
\end{split}
\end{equation}
that is, $\widehat f(u) = 2^{-n} \sum_{x \in \F_2^n} f(x)(-1)^{u \cdot x}$.
The sum
\begin{equation*}
\begin{split}
\sum_{x \in \F_2^n} f(x)^2 &= \sum_{x \in \F_2^n} \sum_{ u \in \F_2^n} \widehat f (u) (-1)^{u \cdot x}\sum_{ v \in \F_2^n} \widehat f (v) (-1)^{v \cdot x}\\
&= \sum_{ u \in \F_2^n}\sum_{ v \in \F_2^n} \widehat f (u) \widehat f (v)
\sum_{x \in \F_2^n} (-1)^{(u + v) \cdot x} = 2^n \sum_{u\in \F_2^n}\widehat f (u)^2.
\end{split}
\end{equation*}
The identity $\sum_{x\in \F_2^n}\widehat f (x)^2 = 2^{-n}\sum_{x \in \F_2^n} f(x)^2$, is known as the {\em Plancherel's identity}.
This is true for $f : \F_2^n \rightarrow \R$. If $f \in \mathfrak B_n$,
we have the {\em Parseval's identity} $\sum_{u \in \F_2^n}\widehat f(u)^2 = 1$.
For $f, g \in \mathfrak B_n$ the convolution product, $f \ast g$ is defined as
\begin{equation}
\label{convo}
(f \ast g) (x) = 2^{-n} \sum_{y \in \F_2^n} f(y)g(x+y)= 2^{-n} \sum_{y \in \F_2^n} f(x+y)g(y).
\end{equation}
Using \eqref{inv-fourier} on \eqref{convo}
\begin{equation}
\label{convo-1}
\begin{split}
\widehat{f \ast g}(u) & = 2^{-n} \sum_{x \in \F_2^n} (f \ast g)(x)(-1)^{u \cdot x}
= 2^{-2n} \sum_{x \in \F_2^n} \sum_{y \in \F_2^n} f(y)g(x+y)(-1)^{u \cdot x} \\
&= 2^{-2n} \sum_{x \in \F_2^n} \sum_{y \in \F_2^n} f(y)(-1)^{u \cdot y}
g(x+y)(-1)^{u \cdot (x+y)} \\
&= \left(2^{-n} \sum_{y \in \F_2^n} f(y)(-1)^{u \cdot y}\right)
\left( 2^{-n} \sum_{x \in \F_2^n} g(y)(-1)^{u \cdot x} \right)
= \widehat f (u) \widehat g(u).
\end{split}
\end{equation}
For each $x \in \F_2^n$, $(f\ast f) (x) = 2^{-n} \sum_{y \in \F_2^n}f(y)f(x+y)$
is said to be the {\em autocorrelation} of $f$ at $x$, and
$\widehat{f \ast f}(x) = \widehat f (x)^2$.

The derivative of $f \in \mathfrak B_n$ at $c \in \F_2^n$ is the function
\begin{equation}
\label{deriv-defn}
\Delta_c f(x) = f(x)f(x+c), \mbox{ for all } x \in \F_2^n.
\end{equation}
We write
\begin{equation}
\Delta_{x^{(1)}, \ldots, x^{(k)}}f(x) = \prod_{S \subseteq [k]} f\left(x + \sum_{i\in S}x^{(i)}\right),
\end{equation}
where $x^{(i)} \in \F_2^n$, for all $i \in [k]$, and some $k \in \Z^+$.
In Equation \eqref{deriv-defn} we have defined derivatives of a Boolean function when the codomain of the function is $\{1, -1\}$. In that case, the resulting derivative turns out to be function from $\F_2^n$ to $\{1, -1\}$. The derivative of a Boolean function  $F : \F_2^n \rightarrow \F_2$,  at a point $a \in \F_2^n$ is 
\begin{equation}
\label{bf_d1}
\Delta_{a}F(x) = F(x) + F(x+a), \mbox{ for all } x \in \F_{2}^{n}.
\end{equation}
For any $a, b \in \F_2^n$
\begin{equation}
\label{bf_d2}
\Delta_{a,b}F(x) = F(x) + F(x+b) + F(x+a) + F(x+a+b).
\end{equation}
For $a, b, c \in \F_2^n$, 
\begin{equation}
\label{bf_d3}
\begin{split}
\Delta_{a,b,c}F(x) =& F(x) + F(x+c) + F(x+b) + F(x+b+c) + F(x+a) \\
& \quad + F(x+a+c) + F(x+a+b) + F(x+a+b+c).
\end{split}
\end{equation}
In general for $x^{(1)}, \ldots, x^{(k)} \in \F_2^n$, 
\begin{equation}
\label{bf_d4}
\Delta_{x^{(1)}, \ldots, x^{(k)}}F(x) = 
\sum_{S \subseteq [k]} F\left(x + \sum_{i \in S} x^{(i)}\right). 
\end{equation}


\subsection{Gowers uniformity norms}

Gowers \cite{Gowers2001} introduced (now, called Gowers) uniformity norms in his work on Szmer\'edi's theorem. For an introductory reading on the topic, we refer to the Ph.D. thesis of Chen~\cite{Chen}. 
The Gowers $U_k$ norm of $f \in \mathfrak B_n$, denoted by $\norm{f}_{U_k}$,  is defined as 
\begin{equation}
\label{gowers-defn}
\norm{f}_{U_k} = \left(2^{-(k+1)n} \sum_{x, x^{(1)}, \ldots, x^{(k)} \in \F_2^n}
\prod_{S \subseteq [k]}f \left( x +   \sum_{i \in S}x^{(i)}\right)\right)^{2^{-k}}. 
\end{equation}
The Gowers $U_2$ norm is 
\allowdisplaybreaks
\begin{eqnarray*}
\norm{f}_{U_2} &= &\left( 2^{-3n}             
\sum_{x \in \F_2^n} \sum_{a \in \F_2^n}\sum_{b \in \F_2^n} f(x) f(x+a)f(x+b)f(x+a+b)
 \right)^{2^{-2}} \nonumber\\
 & =& \left( 2^{-3n}             
\sum_{a \in \F_2^n} \sum_{x \in \F_2^n} f(x) f(x+a)\sum_{b \in \F_2^n} f(x+b)f(x+a+b)
 \right)^{2^{-2}}\nonumber\\
 & = &\left( 2^{-3n}             
\sum_{a \in \F_2^n} \sum_{x \in \F_2^n} f(x) f(x+a)\sum_{y \in \F_2^n} f(y)f(y+a)
 \right)^{2^{-2}} \\
 & = &\left( 2^{-n}             
\sum_{a \in \F_2^n} \sum_{x \in \F_2^n} \widehat f(x)^2 (-1)^{x\cdot a} \sum_{y \in \F_2^n} \widehat f(y)^2 (-1)^{y\cdot a} \right)^{2^{-2}} \nonumber\\
 & = &\left( 2^{-n}             
\sum_{x \in \F_2^n}  \sum_{y \in \F_2^n}  \widehat f(x)^2 \widehat f(y)^2
\sum_{a \in \F_2^n}  
 (-1)^{(x+y)\cdot a} \right)^{2^{-2}} = \left( \sum_{x \in \F_2^n}  \widehat f(x)^4 \right)^{2^{-2}}.\nonumber
\end{eqnarray*}
The Gowers $U_3$ norm is 
\begin{equation}
\label{gowers-u3}
\begin{split}
\norm{f}_{U_3} &= \Big( 2^{-4n}             
\sum_{x \in \F_2^n} \sum_{a \in \F_2^n}\sum_{b \in \F_2^n} \sum_{c \in \F_2^n} f(x) f(x+a)f(x+b)f(x+a+b) \\ & \qquad  f(x+c) f(x+a+c)f(x+b+c)f(x+a+b+c)
 \Big)^{2^{-3}}.
\end{split}
\end{equation}
Substituting the derivative in \eqref{gowers-u3}
\begin{equation}
\label{gowers-u3-1}
\begin{split}
\norm{f}_{U_3} &= \Big( 2^{-4n}             
\sum_{c \in \F_2^n} \sum_{x, a, b \in \F_2^n}   
\Delta_c f(x) \Delta_c f(x+a)\Delta_c f(x+b)\Delta_c f(x+a+b) \Big)^{2^{-3}}\\
&= \Big( 2^{-n}             
\sum_{c \in \F_2^n} \norm{\Delta_c f(x)}_{U_2}^{2^2} \Big)^{2^{-3}}. 
\end{split}
\end{equation}
In general, the Gowers $U_k$  norm of $f \in \mathfrak B_n$ is
\begin{equation}
\label{gowers-uk}
\begin{split}
\norm{f}_{U_k} &= \left(  2^{-(k+1)n} \sum_{x, x^{(1)}, \ldots, x^{(k)}\in \F_2^n} 
\prod_{S\subseteq [k]\setminus [2]} \prod_{T\subseteq [2]}  f\left(x + \sum_{i \in S}x^{(i)}
+ \sum_{j \in T}x^{(j)} \right)\right)^{2^{-k}}\\
&= \left(  2^{-(k+1)n}
\sum_{x, x^{(1)}, \ldots, x^{(k)}\in \F_2^n} 
 \prod_{T\subseteq [2]} \prod_{S\subseteq [k]\setminus [2]} f\left(x + \sum_{i \in S}x^{(i)}
+ \sum_{j \in T}x^{(j)} \right)\right)^{2^{-k}}\\
&= \left(  2^{-(k+1)n}
\sum_{ x^{(3)}, \ldots, x^{(k)}\in \F_2^n} 
 \sum_{ x^{(1)},  x^{(2)}\in \F_2^n} 
 \prod_{T\subseteq [2]} \Delta_{x^{(3)}, \ldots, x^{(k)} }
  f\left(x + \sum_{j \in T}x^{(j)} \right)\right)^{2^{-k}}\\
&=\left (  2^{-(k-2)n}
\sum_{ x^{(3)}, \ldots, x^{(k)}\in \F_2^n} 
 \norm{\Delta_{x^{(3)}, \ldots, x^{(k)}} f(x )}_{U_2}^{2^2}\right)^{2^{-k}}. 
\end{split}
\end{equation}
Equation~\eqref{gowers-uk} shows the relation between the Gowers $U_k$ norm and the
$U_2$ norms of the $(k-2)$th derivatives of $f$. 
The time complexity of computing the Gowers $U_2$ norm of a Boolean function $f \in \mathfrak B_n$ is $O(n2^{2n})$. Arguing  in the same way, the time complexity of computing Gowers $U_k$ norm is $O(n2^{kn})$.  

In this paper, we propose a quantum algorithm to estimate an upper bound of 
Gowers $U_2$ norm and based upon that, we find a quantum counterpart of the BLR linearity testing~\cite{BLR93} that tends to perform better than the 
classical version, assuming the availability of a quantum computer with sufficient number of qubits. The complexities of the quantum algorithms are independent
of the number of variables $n$, of course, again with the strong assumption of 
the availability of a fairly large quantum computer.  

\subsection{Gowers uniformity norms and approximation of Boolean functions
by low degree Boolean functions}
\label{approx}

In this section, we discuss the connection between the Gowers uniformity norms and the approximation of Boolean functions by low degree Boolean functions. The nonlinearity, denoted by $nl(f)$, of a Boolean function $f \in \mathfrak B_n$ is the minimum Hamming
distance from $f$ to all affine functions in $\mathfrak A_n$. That is
\begin{equation}
\label{nl}
nl(f) = \min \{ d_H(f, \varphi) : \varphi \in \mathfrak A_n \}.
\end{equation}
The $r$th-order nonlinearity of a Boolean function $f$, denoted by
$nl_r(f)$, is the minimum  Hamming distance from $f$ to the 
functions having algebraic degree less than or equal to~$r$. The first-order nonlinearity $nl_1(f) = nl(f)$. It is well known that (cf.~\cite{Pante})
\begin{equation}
\label{nl-fc}
nl(f) = 2^{n-1} \Big(1 -  \max_{x\in \F_2^n} |\widehat f (x)|\Big). 
\end{equation}
Carlet~\cite{Carlet08} obtained lower bounds of $r$th-order nonlinearity  of Boolean functions by using nonlinearities of their higher-order derivatives. This establishes a relationship between the $r$th-order nonlinearities of Boolean functions and Fourier coefficients of their derivatives. Gowers uniformity norms involve Fourier coefficients of higher-order derivatives \eqref{gowers-uk},  and serve the same purpose as evident from the following theorem. 
\begin{theorem}[\cite{Chen}, Fact 2.2.1]
\label{gw}
Let $k \in \Z^+$, $\epsilon > 0$. Let $P : \F_2^n \rightarrow \F_2$ be a polynomial of degree at most $k$, and $f: \F_2^n\rightarrow \R$. Suppose $\abs{2^{-n} \sum_{x\in \F_2^n}f(x)(-1)^{P(x)}} \geq \epsilon$. Then $\norm{f}_{U_{k+1}}\geq \epsilon$. 
\end{theorem}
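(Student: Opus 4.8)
The plan is to induct on $k$, using the recursive structure of the Gowers norms exhibited in Equation~\eqref{gowers-uk}, which expresses $\norm{f}_{U_{k+1}}$ in terms of the $U_2$ norms of the $(k-1)$st derivatives of $f$. The base case $k=0$ is the assertion that if $P$ is a constant polynomial (degree at most $0$), say $P(x)=\varepsilon$, and $\abs{2^{-n}\sum_x f(x)(-1)^\varepsilon}\geq\epsilon$, then $\norm{f}_{U_1}\geq\epsilon$; since $(-1)^\varepsilon$ is a global sign, this reduces to $\norm{f}_{U_1}=\abs{2^{-n}\sum_x f(x)}\geq\epsilon$, which is immediate from the definition. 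The case $k=1$ is the content of the computation in the excerpt that $\norm{f}_{U_2}=\big(\sum_u\widehat f(u)^4\big)^{1/4}$: if $P(x)=u\cdot x+\varepsilon$ is affine, then $\abs{2^{-n}\sum_x f(x)(-1)^{P(x)}}=\abs{\widehat f(u)}\leq\big(\sum_v\widehat f(v)^4\big)^{1/4}=\norm{f}_{U_2}$, where the middle inequality is just $|t|\le(\sum t_v^4)^{1/4}$ applied to the single term $t=\widehat f(u)$.

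For the inductive step, suppose the statement holds for $k-1$ and let $P$ have degree at most $k$ with $\abs{2^{-n}\sum_x f(x)(-1)^{P(x)}}\geq\epsilon$. The key algebraic fact is that for any direction $c\in\F_2^n$, the function $x\mapsto P(x)+P(x+c)$ is a polynomial of degree at most $k-1$ (the top-degree part cancels under the discrete derivative). Writing $g_c(x)=f(x)f(x+c)=\Delta_c f(x)$ for the (character-form) derivative, I want to bound $\norm{\Delta_c f}_{U_k}$ from below for a well-chosen $c$, and then assemble these bounds via \eqref{gowers-uk}. Concretely, consider
\begin{equation*}
2^{-2n}\sum_{x,c\in\F_2^n} f(x)f(x+c)(-1)^{P(x)+P(x+c)} = \Big(2^{-n}\sum_{x}f(x)(-1)^{P(x)}\Big)^2 \geq \epsilon^2,
\end{equation*}
so by averaging over $c$ there exists $c$ with $\abs{2^{-n}\sum_x \Delta_c f(x)(-1)^{P(x)+P(x+c)}}\geq\epsilon^2$. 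Applying the inductive hypothesis to $\Delta_c f$ with the degree-$(k-1)$ polynomial $P(x)+P(x+c)$ gives $\norm{\Delta_c f}_{U_k}\geq\epsilon^2$ for that $c$.

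The remaining obstacle — and the main one — is converting a good bound for a \emph{single} direction $c$ into the bound for $\norm{f}_{U_{k+1}}$, which by \eqref{gowers-uk} equals $\big(2^{-n}\sum_c\norm{\Delta_c f}_{U_k}^{2^k}\big)^{2^{-(k+1)}}$ and thus involves an \emph{average} over all $c$. A single large term of size $\epsilon^2$ among $2^n$ nonnegative terms only yields $\norm{f}_{U_{k+1}}\geq (2^{-n}\epsilon^{2^{k+1}})^{2^{-(k+1)}}$, which carries an unwanted $2^{-n/2^{k+1}}$ factor. To remove it, I would instead not pass to a single $c$ but keep the full average: expand $2^{-n}\sum_c\abs{2^{-n}\sum_x\Delta_c f(x)(-1)^{P(x)+P(x+c)}}^{2^k}$ and apply the inductive hypothesis in averaged form, or — cleaner — use the standard Gowers–Cauchy–Schwarz argument directly. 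Namely, repeatedly apply the Cauchy–Schwarz inequality to the defining sum for $2^{-n}\sum_x f(x)(-1)^{P(x)}$, introducing one derivative direction at a time and using at each stage that the derivative of $P$ has dropped in degree, until after $k$ steps the phase $(-1)^{(\text{derivative of }P)}$ has become constant and the surviving sum is exactly $\norm{f}_{U_{k+1}}^{2^{k+1}}$. Each Cauchy–Schwarz step squares the left-hand side and doubles the exponent on the right, yielding $\epsilon^{2^{k+1}}\leq\norm{f}_{U_{k+1}}^{2^{k+1}}$ with no dimension-dependent loss. This ``Cauchy–Schwarz with a disappearing phase'' is the crux; everything else is bookkeeping of exponents and the elementary observation that discrete differentiation lowers polynomial degree.
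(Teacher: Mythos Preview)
The paper does not prove this theorem at all: it is quoted as Fact~2.2.1 from Chen's thesis and used as a black box. So there is no ``paper's own proof'' to compare against; I can only assess your argument on its merits.

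Your overall strategy is the standard one and is correct. A couple of remarks on the execution:

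\begin{itemize}
\item The detour through a single direction $c$ is, as you yourself note, a dead end because of the $2^{-n/2^{k+1}}$ loss; you might simply omit it. If you want to salvage the inductive route, the missing step is a power-mean (Jensen) inequality: from $\epsilon^2 \le 2^{-n}\sum_c \norm{\Delta_c f}_{U_k}$ one gets $\epsilon^2 \le \bigl(2^{-n}\sum_c \norm{\Delta_c f}_{U_k}^{2^k}\bigr)^{2^{-k}} = \norm{f}_{U_{k+1}}^2$ directly, with no dimension loss.
\item In your repeated Cauchy--Schwarz description there is an off-by-one: you need $k{+}1$ applications, not $k$. After $k$ steps the phase $\Delta_{h_1,\ldots,h_k}P$ is constant in $x$ but may still depend on $h_1,\ldots,h_k$; one more Cauchy--Schwarz removes that sign and produces exactly $\norm{f}_{U_{k+1}}^{2^{k+1}}$.
\end{itemize}

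The cleanest way to package what you are doing is this two-line proof: set $g=f\cdot(-1)^P$; then $\epsilon \le \abs{2^{-n}\sum_x g(x)} = \norm{g}_{U_1} \le \norm{g}_{U_{k+1}} = \norm{f}_{U_{k+1}}$. Here the inequality is monotonicity of Gowers norms (each step is one Cauchy--Schwarz, exactly your ``introducing one derivative direction at a time''), and the last equality holds because $\Delta_{h_1,\ldots,h_{k+1}}P\equiv 0$ when $\deg P\le k$, so the polynomial phase contributes $1$ to every term of the $U_{k+1}$ sum. This is the content of your final paragraph with the bookkeeping made explicit.
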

For $k=1$, informally, this means that if, for some $f$, the norm $\norm{f}_{U_2}$ is small then its Fourier coefficients are small, and therefore $f$ has high nonlinearity.  On the other hand,

\begin{equation}
\label{u2-nl}
\begin{split}
 \norm{f}_{U_2}^{4} &= \sum_{x\in \F_2^n} \widehat f(x)^4 \\
&\leq \max_{x\in \F_2^n} |\widehat f (x)|^2 \sum_{x\in \F_2^n} \widehat f(x)^2 \\
&= \max_{x\in \F_2^n} |\widehat f (x)|^2 \mbox{ (applying Parseval identity)}\\
&= (1 - 2^{1-n}nl(f))^2 \mbox{                  (using \eqref{nl-fc})}. 
\end{split}
\end{equation}
Equation~\eqref{u2-nl} tells us that if a Boolean function has high nonlinearity then 
its $U_2$ norm is small, and if $U_2$ norm is large, then the nonlinearity is small.  

The second-order nonlinearity of a Boolean function is the minimum of the distances of that function from the quadratic Boolean function (i.e., the Boolean functions with algebraic degree at most $2$). By Theorem~\ref{gw}, for all polynomials  $P: \F_2^n \rightarrow \F_2$ of degree  at most $2$, if $\norm{f}_{U_3} < \epsilon$,  then $\abs{2^{-n} \sum_{x \in \F_2^n}f(x)(-1)^{P(x)}} < \epsilon$. Therefore, the second-order nonlinearity of such functions ought to be high. 
Green and Tao~\cite{green2005inverse} proved that just as for $U_2$, if a Boolean function has high second-order nonlinearity, then its $U_3$ norm is low. They also proved that such an implication is not valid for $U_k$ norms for $k \geq 4$. 

The discussion in this section points to the fact that Gowers $U_2$ and $U_3$ norms have the promise of being good indicators for the first and second-order nonlinearities of a Boolean function. 
Determination of these nonlinearities have complexities that scale exponentially with the number of input variables of  Boolean functions.  In the following section, we propose a quantum algorithm to estimate an upper bound of the Gowers $U_2$  norm that is probabilistic in nature, the probability converges as $e^{-2m^2t^2}$ where $m$ is the number of trials and $t$ is a positive error margin.

\subsection{Quantum information: definitions and notation}

In this section, we will introduce some notation that we use throughout the paper. For an introduction to quantum computing, we refer to Rieffel and Polak \cite{Rieffel}, or Nielsen and Chuang \cite{NC10}.

 A {\em qubit} or {\em qu-bit}  can be described by a vector $\Ket{\psi}=(a,b)^T\in{\mathbb{C}}^2$, where `$T$' indicates the transpose, $|a|^2$
is the probability of observing the value 0 when we measure
the qubit, and $|b|^2$ is the probability of observing 1. If both $a$ and
$b$ are nonzero, the qubit has both the value 0 and 1 at the same
time, and we call this a {\em superposition}. Once we have measured the
qubit, however, the superposition collapses, and we are left with a
classical state that is either 0 or 1 with certainty. A state of $n$ qubits
is represented by a normalized complex vector with $2^n$ elements. We define $\Bra{\psi}$ as the conjugate transpose of $\Ket{\psi}$. This notation is known as the bra-ket notation. We denote the standard basis (column) vectors as  $\ket0$ and $\ket1$, and then $\Ket{\psi}=(a,b)^T=a\ket0+b\ket1$.

In the following, we will use the conventional notation $\ket a \ket b:=\ket a\otimes \ket b$,  or $\Ket{ a  b}:=\ket a\otimes \ket b$. A state on $n$ qubits can be represented as a $\C$-linear combination of the vectors of the standard basis $\ket\psi=\sum_{x\in\F_2^n}a_x\ket x$, where $a_x\in\C,\ \forall x\in \F_2^n$, and $\sum_{x\in\F_2^n}|a_x|^2=1$.

 Let $\ket{0_n}$ be the quantum state associated with the zero vector in $\F_2^n$.  Let $\ket + = \frac{\ket 0 + \ket 1}{\sqrt 2}$ and $\ket - = \frac{\ket 0 - \ket 1}{\sqrt 2}$. For any $x \in \F_2^n$ and $ \epsilon \in \F_2$, the {\it bit oracle implementation} $U_F$ of $F$ is 
\begin{equation}
\label{bit-oracle}
\ket \varepsilon \ket x \xrightarrow{U_F} \ket{\varepsilon + F(x)} \ket x.  
\end{equation}

Here, $n$-qubits in $\ket{x}$ specify the input state that changes the target qubit $\ket{\epsilon}$ according to the value of the Boolean function $F(x)$.

If the first qubit is $\ket -$, then $\ket - \ket x \xrightarrow{U_F} (-1)^{F(x)}\ket - \ket x$.  We write $\ket x \xrightarrow{U_F} (-1)^{F(x)} \ket x$ with the understanding that there is an additional target qubit in the $\ket{-}$ state that remains unchanged and refer to this as the  {\it phase oracle implementation} of the function~$F$. Suppose that a computational basis state is of the form $\ket{x^{(1)}\| x^{(2)} \| \ldots \|x^{(m)}}$ where for any two vectors $x \in \F_2^r$ and $y \in \F_2^s$, the concatenation 
$x \| y = (x_1, \ldots, x_r, y_1, \ldots, y_s)$ is a vector in $\F_2^{r+s}$. 
It is reasonable to write $\ket{x^{(1)}\| x^{(2)} \| \ldots \|x^{(m)}} = \ket{x^{(1)}} \ket{x^{(2)}}  \ldots \ket{x^{(m)}}$. The vector $x^{(i)} \in \F_2^{r_i}$, for some $r_i \in \Z^+$ is said to be the content of the 
$i$th register. If $x^{(i)}, x^{(j)} \in \F_2^r$, for some $r \in \Z^+$, we define 
$MCNOT_i^j$ as 
\begin{equation*}
\label{cnot-def}
\ket{x^{(1)}} \ldots \ket{x^{(i)}}  \ldots \ket{x^{(j)}}  \ldots \ket{x^{(m)}}
\xrightarrow{MCNOT_i^j} \ket{x^{(1)}} \ldots \ket{x^{(i)}+x^{(j)}}  \ldots \ket{x^{(j)}}  \ldots \ket{x^{(m)}}. 
\end{equation*}
We can realize the transformation induced by $MCNOT_i^j$ by using an appropriate number of conventional $CNOT$ gates. 

Let $I = \begin{pmatrix}1&0\\0&1 \end{pmatrix} $ be the $2\times 2$ identity matrix, and $H = \frac{1}{\sqrt{2}} \begin{pmatrix*}[r] 1 & 1 \\ 1 & -1 \end{pmatrix*}$ be the $2\times 2$ Hadamard matrix. The tensor product of matrices is denoted by $\otimes$. The matrix $H_n$ is recursively defined as: 
\begin{equation}
\label{def-hadamard}
\begin{split}
H_2 &= H \otimes H, \\
H_n &= H_{n-1} \otimes H_{n-1}, \mbox{ for all } n \geq 3.
\end{split}
\end{equation}
Note that, for $x\in\F_2^n$, $H_n\ket x=2^\frac{-n}{2}\sum_{x'\in\F_2^n}(-1)^{x\cdot x'}\ket{x'}$.

In the next section we propose an algorithm to compute Gowers $U_2$ norm of Boolean functions. Our approach   resembles that employed by Bera, Maitra, and Tharrmashashtha~\cite{BeraMT19} to estimate the autocorrelation spectra of Boolean functions. 

\section{A quantum algorithm to estimate Gowers uniformity norms}

We prepare the quantum state 
$2^{-\frac{3n}{2}} \sum_{b \in \F_2^n} \sum_{a \in \F_2^n}\sum_{x \in \F_2^n} \ket{x}\ket{a}\ket{b}$, and apply the following transformations: 
\allowdisplaybreaks
\begin{eqnarray}
\label{main-algo-1}
&&2^{-\frac{3n}{2}} \sum_{b \in \F_2^n} \sum_{a \in \F_2^n}\sum_{x \in \F_2^n} \ket{x}\ket{a}\ket{b}\nonumber\\
&&\xrightarrow{U_F \otimes I \otimes I}  
2^{-\frac{3n}{2}} \sum_{b \in \F_2^n}  \sum_{a \in \F_2^n}\sum_{x \in \F_2^n} (-1)^{F(x)}\ket{x}\ket{a}\ket{b}\nonumber\\
&&\xrightarrow{MCNOT_1^2}  
2^{-\frac{3n}{2}} \sum_{b \in \F_2^n}  \sum_{a \in \F_2^n}\sum_{x \in \F_2^n} (-1)^{F(x)}\ket{x+a}\ket{a}\ket{b}\nonumber\\
&&\xrightarrow{U_F \otimes I \otimes I}  
2^{-\frac{3n}{2}} \sum_{b \in \F_2^n}  \sum_{a \in \F_2^n}\sum_{x \in \F_2^n} (-1)^{F(x) + F(x+a)}\ket{x+a}\ket{a}\ket{b}\nonumber\\
&&\xrightarrow{MCNOT_1^2}  
2^{-\frac{3n}{2}} \sum_{b \in \F_2^n}  \sum_{a \in \F_2^n}\sum_{x \in \F_2^n} (-1)^{F(x)+ F(x+a)}\ket{x}\ket{a}\ket{b}\nonumber\\
&&\xrightarrow{MCNOT_1^3}  
2^{-\frac{3n}{2}} \sum_{b \in \F_2^n}  \sum_{a \in \F_2^n}\sum_{x \in \F_2^n} (-1)^{F(x)+ F(x+a)}\ket{x+b}\ket{a}\ket{b}\\
&&\xrightarrow{U_F \otimes I \otimes I}  
2^{-\frac{3n}{2}} \sum_{b \in \F_2^n}  \sum_{a \in \F_2^n}\sum_{x \in \F_2^n} (-1)^{F(x)+F(x+a)+F(x+b)}\ket{x+b}\ket{a}\ket{b}\nonumber\\
&&\xrightarrow{MCNOT_1^2}  
2^{-\frac{3n}{2}} \sum_{b \in \F_2^n}  \sum_{a \in \F_2^n}\sum_{x \in \F_2^n} (-1)^{F(x)+F(x+a)+F(x+b)}\ket{x+a+b}\ket{a}\ket{b}\nonumber\\
&&\xrightarrow{U_F \otimes I \otimes I}  
2^{-\frac{3n}{2}} \sum_{b \in \F_2^n}  \sum_{a \in \F_2^n}\sum_{x \in \F_2^n} (-1)^{\Delta_{a,b}F(x)}\ket{x+a+b}\ket{a}\ket{b}\nonumber\\
&&\xrightarrow{MCNOT_1^2}  
2^{-\frac{3n}{2}} \sum_{b \in \F_2^n}  \sum_{a \in \F_2^n}\sum_{x \in \F_2^n} (-1)^{\Delta_{a,b}F(x)}\ket{x+b}\ket{a}\ket{b}\nonumber\\
&&\xrightarrow{MCNOT_1^3}  
2^{-\frac{3n}{2}} \sum_{b \in \F_2^n}  \sum_{a \in \F_2^n}\sum_{x \in \F_2^n} (-1)^{\Delta_{a,b}F(x)}\ket{x}\ket{a}\ket{b}\nonumber\\
&&\xrightarrow{H_n^{\otimes 3}}   
 \sum_{a', b', x' \in \F_2^n}
(2^{-3n} \sum_{x, a, b \in \F_2^n} (-1)^{\Delta_{a,b}F(x) + a\cdot a' + b \cdot b' + x \cdot x'})\ket{x'}\ket{a'}\ket{b'}.\nonumber
\end{eqnarray}

 It should be remembered that in addition to the three $n$-qubit registers used, there is an additional target qubit that is in the $\ket{-}$ state and remains unchanged throughout, owing to this fact the qubit has been dropped from the sequence of operations, for brevity.

The above sequence of operations excluding the last $H_n^{\otimes 3}$ is summarized as 
\begin{equation}
\label{main-algo-2}
2^{-\frac{3n}{2}} \sum_{a, b, x \in \F_2^n } \ket{x, a, b} \xrightarrow{\mathfrak D_F}2^{-\frac{3n}{2}} \sum_{a, b, x \in \F_2^n} (-1)^{\Delta_{a,b}F(x)}\ket{x, a, b}.
\end{equation}
The probability that a measurement of the resultant state yields the result $\ket{0_n}\ket{0_n}\ket{0_n}$ is given by
\begin{equation*}
\Pr[ x' = a' = b' = 0_n] = \Big(2^{-3n}\sum_{x,a,b \in \F_2^{n}}{(-1)}^{\Delta_{a,b}F(x)}\Big)^2
\end{equation*}
and since $f(x) = {(-1)}^{F(x)}$, using \eqref{bf_d2}
\begin{equation}
\label{qg_U2result}
\Pr[ x' = a' = b' = 0_n] = ({\norm{f}}_{U_2}^{4})^2 = \norm{f}_{U_2}^{8}. 
\end{equation}

\subsection{Estimation of the upper bound of Gowers $U_2$ norm}

Let the final output state at the end of the transformation described in \eqref{main-algo-1} be 
\begin{equation}
\ket{\Psi} =   \sum_{a', b', x' \in \F_2^n} C(x', a', b') \ket{x' a' b'}.
\end{equation}
The  probability amplitude of the state $\ket{x' a' b'}$ is 
\begin{equation}
C(x', a', b') = 2^{-3n} \sum_{x, a, b \in \F_2^n} (-1)^{\Delta_{a,b}F(x) + a\cdot a' + b \cdot b' + x \cdot x'}.
\end{equation}
The outcome of a measurement, with respect to the computational basis, performed on the output state is a $3n$ bit string $(x'\|a'\|b')$, where $x', a', b' \in \F_2^n$, and the probability of measuring said string is $\abs{C(x', a', b')}^2$. Therefore, the eighth power of the Gowers $U_2$ norm is given by $\abs{C(0_n, 0_n, 0_n)}^2$.
The next theorem outlines a strategy to determine a probabilistic upper bound of the Gowers $U_2$ norm. 
\begin{theorem}
\label{th-g-u2}
We assume that the measurements are done with respect to the computational basis. 
Suppose that $Y$ is a random variable defined on the set of all possible measurement outcomes on the quantum state $H_n^{\otimes 3} \circ \mathfrak D_F \Big(2^{-\frac{3n}{2}} \sum_{a, b, x \in \F_2^n } \ket{x}\ket{a}\ket{b}\Big)$  as 
\begin{equation*}
Y(x', a', b') = 2^{-3n}(x'\|a'\|b')_{10},
\end{equation*} 
where $(x'\|a'\|b')_{10}$ is the decimal value of the concatenated $3n$ bit string. Following the usual convention, we write $Y$ instead of $Y(x', a', b')$.
Suppose that $(Y_1, \ldots, Y_m)$ be a random sample such that each $Y_i$ is independent and identically distributed as $Y$. Let $\overline{Y} = \frac{1}{m}\sum_{i\in [m]} Y_i$. 
Then 
$$\Pr\left[ \norm{f}_{U_2} \leq (1 + t -\overline{Y})^{1/2^3}  \right] \geq 1 - \exp(-2m^2t^2),$$
for any positive real number $t$. 
\end{theorem}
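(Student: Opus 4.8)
The plan is to discard the quantum machinery after the measurement and argue purely with the classical random variable $Y$, using only two inputs: $Y$ is nonnegative and bounded above by $1$, and $Y=0$ occurs with probability \emph{exactly} $\norm{f}_{U_2}^8$. From these I obtain an upper bound for $\E[Y]$ in terms of $\norm{f}_{U_2}$, and Hoeffding's inequality then transfers it to the empirical mean $\overline{Y}$.

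First I would pin down the distribution of $Y$ at $0$. By the computation recorded in \eqref{main-algo-2}--\eqref{qg_U2result}, the outcome $(x',a',b')=(0_n,0_n,0_n)$ --- which is the unique outcome at which $Y$ vanishes, since $Y(x',a',b')=2^{-3n}(x'\|a'\|b')_{10}$ is $0$ iff the concatenated string is all zeros --- is observed with probability $\abs{C(0_n,0_n,0_n)}^2=\norm{f}_{U_2}^8$, while every other outcome has $0<Y\le 1-2^{-3n}<1$. Therefore
\[
\E[Y]=\sum_{(x',a',b')\neq(0_n,0_n,0_n)} Y\,\abs{C(x',a',b')}^2 \;\le\; \sum_{(x',a',b')\neq(0_n,0_n,0_n)}\abs{C(x',a',b')}^2 \;=\;1-\norm{f}_{U_2}^8,
\]
so that $\norm{f}_{U_2}^8\le 1-\E[Y]$ and, taking eighth roots of the two nonnegative sides, $\norm{f}_{U_2}\le(1-\E[Y])^{1/2^3}$.

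Next comes the concentration step. Since the $Y_i$ are i.i.d.\ copies of $Y$, we have $\E[\overline{Y}]=\E[Y]$, and each $Y_i$ takes values in $[0,1]$; Hoeffding's inequality therefore gives $\Pr[\overline{Y}-\E[Y]\ge t]\le\exp(-2m^2t^2)$ for every $t>0$. On the complementary event, of probability at least $1-\exp(-2m^2t^2)$, we have $\overline{Y}<\E[Y]+t$, hence
\[
1+t-\overline{Y} \;>\; 1-\E[Y] \;\ge\; \norm{f}_{U_2}^8 \;\ge\; 0 ,
\]
and taking eighth roots yields $(1+t-\overline{Y})^{1/2^3}>\norm{f}_{U_2}$. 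Thus the event $\{\norm{f}_{U_2}\le(1+t-\overline{Y})^{1/2^3}\}$ contains this event of probability at least $1-\exp(-2m^2t^2)$, which is exactly the assertion of the theorem.

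\textbf{Where the work is.} There is no serious obstacle; the single point that deserves care is the inequality $\E[Y]\le1-\norm{f}_{U_2}^8$, which is available only because $Y$ is bounded and its sole zero is the all-zeros string. This same feature is why the estimate is one-sided: the sample mean pins $\E[Y]$ --- and hence $\norm{f}_{U_2}$ --- from above but not from below, so the method delivers a probabilistic upper bound only. One should also verify the precise constant in the Hoeffding exponent against the range of $Y$ and the normalization chosen for $\overline{Y}$.
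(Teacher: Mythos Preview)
Your argument is correct and mirrors the paper's proof essentially step for step: bound $\E[Y]\le 1-\norm{f}_{U_2}^{8}$ using $0\le Y\le 1$ together with $\Pr[Y=0]=\norm{f}_{U_2}^{8}$, then apply Hoeffding's inequality to pass from $\E[Y]$ to $\overline{Y}$ and rearrange. Your closing caveat about checking the Hoeffding exponent is well placed---the paper invokes the same $\exp(-2m^{2}t^{2})$ form you do.
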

\begin{proof}
 Let the expectation of $Y$, 
$E[Y] = \mu$. 
Let $\Pr[Y=0] =  \norm{f}_{U_2}^{8} = p$, so $\Pr[Y \neq 0] = 1-p$.
The range of the random variable $Y$ has $2^{3n}$ distinct values in the interval $[0, 1]$
including $0$. Let us denote them by $y_0, y_1, \ldots, y_{2^{3n}-1}$, where $y_j=2^{-3n}j$. The expectation of $Y$ is
\begin{equation}
\label{exp-inq}
\begin{split}
\mu = E[Y] &=  y_0 \Pr[Y=0] + y_1 \Pr[Y=y_1] + \cdots + y_{2^{3n}-1}\Pr[Y=y_{2^{3n}-1}]\\
&=  y_1 \Pr[Y=y_1] + y_2 \Pr[Y=y_2] + \cdots + y_{2^{3n}-1} \Pr[Y=y_{2^{3n}-1}]\\
&<   \Pr[Y=y_1] + \Pr[Y=y_2] + \cdots + \Pr[Y=y_{2^{3n}-1}]\\
&= \Pr[Y \neq 0] = 1-p.  
\end{split}
\end{equation} 
Suppose that $(Y_1, \ldots, Y_m)$ be a random sample of size $m$. The sample mean 
is $\overline{Y} = \frac{1}{m} \sum_{i \in [m]} Y_i$. By the Hoeffding's inequality
\cite{Hoeffding1963} 
\begin{equation}
\label{Hoeffding}
\Pr\left[\overline{Y} \geq \mu + t\right] \leq \exp(-2m^2t^2). 
\end{equation}
where $t$ is any positive real number. Using equations \eqref{exp-inq} and \eqref{Hoeffding},
\begin{equation}
\label{estimate}
\begin{split}
& \Pr\left[ 1-p > \mu \geq \overline{Y} - t \right] \geq 1 - \exp(-2m^2t^2), \\
\mbox{which implies }& \Pr\left[ p < 1 + t -\overline{Y}  \right] \geq 1 - \exp(-2m^2t^2),\\
\mbox{that is}, & \Pr\left[ \norm{f}_{U_2} < (1 + t -\overline{Y})^{1/2^3}  \right] \geq 1 - \exp(-2m^2t^2).  
\end{split}
\end{equation}
The theorem is shown.\qed
\end{proof}
The last line of \eqref{estimate} tells us that if we measure $m$ times and compute $\overline{Y}$, then the probability that $\norm{f}_{U_2}$ is bounded above by $(1 + t -\overline{Y})^{1/2^3}$ is $1 - \exp(-2m^2t^2)$. Therefore with an appropriate choice of $m$ and $t$ we can estimate an upper bound of the Gowers $U_2$ norm of $f$ with a very high probability.

\subsection{Linear approximation employing the Gowers $U_2$ norm}
\label{lin-approx}

We start by defining distance between Boolean functions in terms of 
probabilities. 
\begin{definition}
For any two functions $f, g \in \mathfrak B_n$, 
$$\dist(f, g) = \Pr_{\mathbf x \sim \F_2^n}[f(\mathbf x) \neq g(\mathbf x)]= \frac{d_H(f,g)}{2^n}$$
where $\mathbf x$ is a random variable uniformly distributed over $\F_2^n$.
\end{definition}
The function $f$ is said to be $\epsilon$-close to $g$ if 
$\dist(f, g) \leq \epsilon$, and $\epsilon$-far from $g$ if 
$\dist(f, g) > \epsilon$. 
We will now design an algorithm to determine whether a function is linear or $\epsilon$-far from linear; we refer to Hillery and Anderson \cite[Section III]{HilleryA2011} for a discussion on such tests. 

{
\center
\begin{algorithm}[!ht]
\centering
\begin{algorithmic}[1]
\STATEx{Input: Quantum implementation of $f \in \mathfrak B_n$. }
\STATE{Initial state: $2^{-\frac{3n}{2}} \sum_{a, b, x \in \F_2^n } \ket{x, a, b}$. }
\STATE{Perform the following sequence of transformations:   \begin{equation*}
\begin{split}
&2^{-\frac{3n}{2}} \sum_{a, b, x \in \F_2^n } \ket{x, a, b}\\ &\xrightarrow{\mathfrak D_F}2^{-\frac{3n}{2}} \sum_{a, b, x \in \F_2^n} (-1)^{\Delta_{a,b}F(x)}\ket{x, a, b} \\
&\xrightarrow{H_n^{\otimes 3}}   
 \sum_{a', b', x' \in \F_2^n}
(2^{-3n} \sum_{x, a, b \in \F_2^n} (-1)^{\Delta_{a,b}F(x) + a\cdot a' + b \cdot b' + x \cdot x'})
\ket{x',a',b'}.
\end{split}
\end{equation*}
 }
\STATE{Measure the output state with respect to the computational basis.} 
\STATE{If the measurement result is $\ket{0_n, 0_n,0_n}$ then ``ACCEPT'' (the function is linear). } 
\STATE{Else ``REJECT''.} 
\end{algorithmic}
\caption{Linearity checking with the Gowers $U_2$ norm.}
\label{algo-lin}
\end{algorithm}
}
\begin{theorem}
\label{testing-linearity}
If $f$ is a linear function then the output is ``ACCEPT'' with probability~$1$. 
If $f$ is $\epsilon$-far from linear functions, then probability of ``REJECT'' is greater than 
$1 - \exp(-8\epsilon)$.
\end{theorem}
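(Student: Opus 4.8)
The plan is to analyze the acceptance probability of Algorithm~\ref{algo-lin} directly in terms of the Fourier spectrum of $f$, using the identity $\Pr[\text{ACCEPT}] = \abs{C(0_n,0_n,0_n)}^2 = \norm{f}_{U_2}^8 = \sum_{u\in\F_2^n}\widehat f(u)^4$, which is already established in \eqref{qg_U2result} and in the displayed computation of $\norm{f}_{U_2}$. So the whole theorem reduces to two statements about $\sum_u \widehat f(u)^4$: (i) if $f$ is linear then this sum equals $1$, and (ii) if $f$ is $\epsilon$-far from every linear function then this sum is small, specifically less than $\exp(-8\epsilon)$.

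For part (i): if $f = (-1)^{u\cdot x}$ is linear, then $\widehat f(v) = \delta_{u,v}$, so $\sum_v \widehat f(v)^4 = 1$ and the algorithm accepts with certainty. This is immediate.

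For part (ii): first I would reduce ``$\epsilon$-far from linear'' to a statement about the largest Fourier coefficient. By \eqref{nl-fc} and the definition of $\dist$, being $\epsilon$-far from the affine functions means $\dist(f,\varphi) > \epsilon$ for all $\varphi \in \mathfrak A_n$, i.e. $\max_{v}\abs{\widehat f(v)} < 1 - 2\epsilon$ (the factor $2$ coming from the fact that flipping the constant term replaces $\widehat f(v)$ by $-\widehat f(v)$, so the affine functions account for both signs and $nl(f) = 2^{n-1}(1 - \max_v\abs{\widehat f(v)})$ with $\dist = nl(f)/2^n$). Actually the statement only asserts $\epsilon$-far from \emph{linear} functions, so one should be slightly careful: $\epsilon$-far from all of $\mathfrak L_n$ gives $\min_v \dist(f, \ell_v) > \epsilon$ where $\ell_v = (-1)^{v\cdot x}$, and since $\dist(f,\ell_v) = \tfrac12(1 - \widehat f(v))$, this yields $\widehat f(v) < 1 - 2\epsilon$ for all $v$ — a one-sided bound, which is in fact all we need. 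Then I would bound
$$
\sum_{v}\widehat f(v)^4 \;\le\; \Big(\max_v \widehat f(v)\Big)^2 \sum_v \widehat f(v)^2 \;<\; (1-2\epsilon)^2
$$
by Parseval (exactly as in \eqref{u2-nl}, but keeping the one-sided max so that a negative coefficient near $-1$ — corresponding to being close to an affine function, not a linear one — does not cause trouble; if it does, fall back on $\abs{\widehat f(v)}$ and the affine interpretation). Finally I would convert $(1-2\epsilon)^2 \le \exp(-8\epsilon)$: write $(1-2\epsilon)^2 = e^{2\ln(1-2\epsilon)}$ and use $\ln(1-2\epsilon) \le -2\epsilon$ (valid for $2\epsilon \in [0,1)$), so $(1-2\epsilon)^2 \le e^{-8\epsilon}$. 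Hence $\Pr[\text{ACCEPT}] < e^{-8\epsilon}$ and $\Pr[\text{REJECT}] > 1 - e^{-8\epsilon}$.

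The main obstacle — really the only subtle point — is the bookkeeping on the ``far from linear'' hypothesis: one must decide whether the relevant quantity is $\max_v \widehat f(v)$ or $\max_v \abs{\widehat f(v)}$, and whether the constant in the exponent should be $8$ or $4$. Being $\epsilon$-far from \emph{linear} functions controls only $\widehat f(v) < 1-2\epsilon$ from above; a function very close to a non-zero affine function could still have $\widehat f(v_0) \approx -1$, making $\sum_v \widehat f(v)^4 \approx 1$, so the theorem as stated must be relying on the one-sided bound $\sum_v \widehat f(v)^4 \le (\max_v \widehat f(v))^2$, which does \emph{not} hold in general. I expect the intended reading is ``$\epsilon$-far from affine'', giving $\max_v\abs{\widehat f(v)} < 1-2\epsilon$ and then the clean bound above goes through; alternatively, if one insists on ``linear'' one gets the weaker conclusion with the same exponent only under an additional sign assumption. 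I would flag this and prove the affine version, noting it implies the linear version is the natural reading.
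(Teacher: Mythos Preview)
Your approach is the same as the paper's: identify $\Pr[\text{ACCEPT}] = \norm{f}_{U_2}^{8}$, bound this by $\max_v\abs{\widehat f(v)}$ via Parseval (exactly inequality~\eqref{u2-nl}), and convert the $\epsilon$-far hypothesis into $\max_v\abs{\widehat f(v)} \le 1-2\epsilon$. The paper then writes $\norm{f}_{U_2}^{8} \le (1-2\epsilon)^4$ and finishes with $1-(1-2\epsilon)^4 \approx 1-e^{-8\epsilon}$, without making the last step a strict inequality.

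Two computational slips in your write-up happen to cancel. First, you assert $\norm{f}_{U_2}^{8} = \sum_u \widehat f(u)^4$, but the displayed calculation of $\norm{f}_{U_2}$ shows $\norm{f}_{U_2}^{4} = \sum_u \widehat f(u)^4$; hence $\Pr[\text{ACCEPT}] = \norm{f}_{U_2}^{8} = \bigl(\sum_u \widehat f(u)^4\bigr)^{2}$, and the Parseval bound yields $(1-2\epsilon)^4$, not $(1-2\epsilon)^2$. Second, from $\ln(1-2\epsilon)\le -2\epsilon$ you only get $(1-2\epsilon)^2 \le e^{-4\epsilon}$, not $e^{-8\epsilon}$; your arithmetic $2\cdot(-2\epsilon)=-8\epsilon$ is wrong. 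With the correct fourth power from the first fix, however, $4\ln(1-2\epsilon)\le -8\epsilon$ does give $(1-2\epsilon)^4 \le e^{-8\epsilon}$, and the stated conclusion survives (indeed, more cleanly than the paper's~``$\approx$'').

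Your worry about ``linear'' versus ``affine'' is well placed, and the paper does not resolve it either: its proof simply invokes $nl(f)$, which by definition~\eqref{nl} is the distance to the \emph{affine} functions, so what is actually established is the bound under the hypothesis ``$\epsilon$-far from $\mathfrak A_n$''. Your sketched counterexample---a function with some $\widehat f(v_0)\approx -1$, hence close to a non-constant affine but far from every linear function, yet with $\sum_u\widehat f(u)^4\approx 1$---correctly shows that the literal ``$\epsilon$-far from $\mathfrak L_n$'' reading cannot give the stated bound. You should therefore drop the attempted one-sided inequality $\sum_v \widehat f(v)^4 \le (\max_v \widehat f(v))^2$ (which you rightly note fails) and simply prove the affine version, as the paper in effect does.
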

\begin{proof}
If $f$ is a linear functions, then the output is ``ACCEPT'' with certainty. This directly 
follows from the definition of Gowers $U_2$ norm. 
If $f$ is $\epsilon$-far from linear functions, then
\begin{equation*}
\norm{f}_{U_2}^{2^3} \leq \left(1 - 2 \frac{nl(f)}{2^n}\right)^4 \leq (1 - 2 \epsilon)^4. 
\end{equation*}
This means that the probability that the output is ``ACCEPT'' is less than or equal 
to $(1 - 2 \epsilon)^4$; therefore the probability of ``REJECT'' is greater than 
$1- (1 - 2 \epsilon)^4 \approx 1 - \exp(-8\epsilon)$.  \qed
\end{proof}
The result concerning the BLR test is: 
\begin{theorem}{\textup{\cite[Theorem 1.30]{O'Donnell}}}
\label{blr}
Suppose the BLR Test accepts $F: \F_2^n \rightarrow \F_2$ with probability $1 - \epsilon$. Then 
$f$ is $\epsilon$-close to being linear. 
\end{theorem}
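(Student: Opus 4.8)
The plan is to recast the BLR acceptance probability as a Fourier-analytic quantity and then extract a heavy Fourier coefficient, following the standard classical argument. Recall that the BLR test draws $x,y\in\F_2^n$ independently and uniformly and accepts iff $F(x)+F(y)=F(x+y)$. Passing to the character form $f=(-1)^F$, the event $F(x)+F(y)=F(x+y)$ is exactly the event $f(x)f(y)f(x+y)=1$, and since the indicator of that event is $\tfrac12\big(1+f(x)f(y)f(x+y)\big)$, the acceptance probability equals $\tfrac12+\tfrac12\,\E_{x,y}\big[f(x)f(y)f(x+y)\big]$. So the hypothesis ``accepts with probability $1-\epsilon$'' is the statement $\E_{x,y}[f(x)f(y)f(x+y)]=1-2\epsilon$.

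The first substantive step is to evaluate this expectation. Substituting the Fourier expansion \eqref{fourier-exp} for each of the three factors and collapsing the $x$- and $y$-sums with the orthogonality relation \eqref{ortho-linear} — the same mechanism already used to prove the convolution identity \eqref{convo-1} — one obtains
\[
\E_{x,y}\big[f(x)f(y)f(x+y)\big]=\sum_{u\in\F_2^n}\widehat f(u)^3 .
\]
Hence the hypothesis becomes $\sum_{u}\widehat f(u)^3=1-2\epsilon$. Next I would bound the cubic sum by a single coefficient, $\sum_{u}\widehat f(u)^3\le\big(\max_{u}\widehat f(u)\big)\sum_{u}\widehat f(u)^2=\max_{u}\widehat f(u)$, where the final equality is Parseval's identity. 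Combining the two displays yields $\max_u\widehat f(u)\ge 1-2\epsilon$, so there is some $u^*\in\F_2^n$ with $\widehat f(u^*)\ge 1-2\epsilon$.

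To finish, I would invoke the dictionary between Fourier coefficients and distances to linear functions: the direct computation $\widehat f(u)=2^{-n}\sum_{x\in\F_2^n}(-1)^{F(x)+u\cdot x}=1-2\,\dist(f,\ell_u)$, where $\ell_u(x)=u\cdot x$ is the linear function with coefficient vector $u$. Applying this with $u=u^*$ turns $\widehat f(u^*)\ge 1-2\epsilon$ into $\dist(f,\ell_{u^*})\le\epsilon$; that is, $f$ is $\epsilon$-close to the linear function $\ell_{u^*}$, which is precisely the claim.

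The argument has no deep obstacle — it is the textbook Fourier proof — but the step that carries all the weight is the identity $\E_{x,y}[f(x)f(y)f(x+y)]=\sum_u\widehat f(u)^3$, since it is what converts a statement about a random triple into the scalar $\max_u\widehat f(u)$; once that is in hand, the Parseval bound and the Fourier–distance correspondence are routine. The one bookkeeping point to keep straight is that BLR catches genuinely \emph{linear} functions rather than merely affine ones: it is the third evaluation $f(x+y)$ (as opposed to $f(x)f(y)$ times a sign) that forces the heavy character to be $\ell_{u^*}$ with no additive constant, so that the conclusion really is closeness to $\mathfrak L_n$.
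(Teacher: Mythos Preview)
Your argument is correct and is precisely the standard Fourier-analytic proof of the BLR theorem; in fact it is essentially O'Donnell's own proof. Note, however, that the paper does not supply a proof of this statement at all --- it merely quotes the result from \cite[Theorem~1.30]{O'Donnell} for comparison with the quantum test in Theorem~\ref{testing-linearity} --- so there is no ``paper's proof'' to compare against, and your write-up simply fills in what the cited reference contains.
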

By the BLR test, if a function is $\epsilon$-far from the linear functions, and
it is promised that we have such functions and linear functions only,
then given a function from the latter class, the probability that the algorithm will  REJECT is greater than $\epsilon$. 

\begin{remark}
The algorithm presented here has been implemented in the IBM quantum machine (https://www.ibm.com/quantum-computing/) for some small examples and has given the expected output of probabilities.
\end{remark}

\section{Appendix: generalization to higher Gowers norms}

The same technique can be used for other Gower's norms. For instance, we can apply the unitary transformation $H_n^{\otimes 4}\circ \mathfrak D_F^3$ to the state $2^{-2n}\sum_{x\in\F_2^n}\sum_{a\in\F_2^n}\sum_{b\in\F_2^n}\sum_{c\in\F_2^n}\ket x\ket a\ket b\ket c$, where, with notation $M_i^j=MCNOT_i^j$ and $U_F^3=U_F\otimes I\otimes I\otimes I$,
$$\mathfrak D_F^3=M_1^3\circ U_F^3\circ M_1^3\circ M_1^4\circ U_F^3\circ M_1^3\circ U_F^3\circ M_1^2\circ U_F^3\circ M_1^4\circ U_F^3\circ M_1^3\circ U_F^3\circ M_1^2\circ U_F^3.$$
We obtain thus the state $\sum_{a',b',b',x'\in\F_2^n}2^{-4n}\sum_{a,b,c,x\in\F_2^n}(-1)^{\Delta_{a,b,c}F(x)}\ket{x,a,b,c}$. Then, $Pr[x'=a'=b'=c'=0_n]=\left(2^{-4n}\sum_{a,b,c,x\in\F_2^n}(-1)^{\Delta_{a,b,c}F(x)}\right)^2$, and, using \eqref{bf_d3}, $Pr[x'=a'=b'=c'=0_n]=\left(\norm{f}_{U_3}^8\right)^2=\norm{f}_{U_3}^{16}$.

\vskip.6cm
\noindent
{\bf Acknowledgment:} Research of C.~A.~Jothishwaran and Sugata Gangopadhyay is a part of the project ``Design and Development 
of Quantum Computing Toolkit and Capacity Building'' sponsored by 
the Ministry of Electronics and Information Technology (MeitY) of the 
Government of India.

\bibliographystyle{splncs03}

\end{document}